\numberwithin{equation}{section}
\newtheorem{theorem}{Theorem} 
\newtheorem{corollary}[theorem]{Corollary}
\newtheorem{lemma}[theorem]{Lemma}
\theoremstyle{remark}
\def\be{\begin{equation}}
\def\ee{\end{equation}}
\def\ve{\varepsilon}
\def\Spec{\text {\rm Spec\,}}
\begin{document}
\setlength{\parskip}{2pt}

\title[Spectral aspects of the skew-shift operator]
{Spectral aspects of the skew-shift operator: \qquad A numerical perspective}
\author{Eric Bourgain-Chang}
\address
{Mechanical Engineering Department, University of California, Berkeley, CA 94720}
\email{ebc@berkeley.edu}
\maketitle

{\bf Abstract}
In this paper we perform a numerical study of the spectra, eigenstates, and Lyapunov exponents of the skew-shift counterpart to Harper's equation. This study is motivated by various conjectures on the spectral theory of these 'pseudo-random' models, which are reviewed in detail in the initial sections of the paper. The numerics carried out at different scales are within agreement with the conjectures and show a striking difference compared with  the spectral features of the Almost Mathieu model. In particular our numerics establish a small upper bound on the gaps in the spectrum (conjectured to be absent).

Key Words: Schr\"odinger operator, skew-shift, spectrum, localization

\section
{Introduction}

The almost Mathieu operator is the self-adjoint operator acting on $\ell^2(\mathbb Z)$ defined by
\be\label{1.1}
[H_{\lambda, \omega, \theta} u] (n) =2\lambda \cos \big(2\pi (n\omega+\theta)\big) u_n+ u_{n+1} +u_{n-1}
\ee
where $\omega, \theta\in \mathbb T=\mathbb R/\mathbb Z$ and $\lambda$ is the coupling parameter.
We always assume that $\omega$ is irrational (and later on, it is diophantine), so the potential in \eqref{1.1} is almost-periodic. A discussion of the physical motivation and background of \eqref{1.1} may be found in \cite{L} for instance. Let us recall that $H_{\lambda, \omega, \theta}$ is a model for the Hamiltonian of an electron in a one-dimensional lattice, subject to a potential, and is also related to the Hamiltonian of an electron in a two-dimensional lattice subject to a perpendicular magnetic field. Such models go back to the work of Peierls \cite{P}
related to the theory of the quantum Hall effect where \eqref{1.1} describes a Bloch electron in a magnetic field.
The case $\lambda=1$ is particularly important and called Harper's equation.
From the mathematical side, \eqref{1.1} has been extensively studied over the recent decades and the complete
spectral theory in the different regimes is understood (the main features will be summarized later).
An important property of \eqref{1.1} is the so-called Aubry duality relating the regimes $|\lambda|<1$ and $|\lambda|>1$.
A major break-through in this area came with the work of Jitomirskaya \cite{J}.

To be noted is that the theory of almost periodic Schr\"odinger operator has been developed in far greater generality than \eqref{1.1} (many
of its specific properties are in some sense `special') but we will not further elaborate on this here.
Our interest goes to the formally related Hamiltonian
\be\label{1.2}
[Hu](n) = 2\lambda\big(\cos(2\pi n^2 \omega)\big) u_n +u_{n+1}+ u_{n-1}
\ee
which we refer to as the skew shift $\rm{Schr\ddot{o}dinger}$ operators, since its potential can be generated from the orbits of the skew shift $T_\omega$ acting on
$\mathbb T^2$, defined by $T_\omega(x, y) =(x+y, y+\omega)$.
Such models are relevant to the theory of the quantum-kicked rotor, the quantum version of the classical Chirikov standard map.
More generally, Jacobi matrices on $\mathbb Z_+$ of the form
\be\label{1.3}
[Hu](n)= 2\lambda \big(\cos(n^\beta)\big) u_n+u_{n+1}+u_{n-1}
\ee
with $\beta >1$ were considered in the works of Griniasty-Fishman \cite{G-F} and Brenner-Fishman \cite{B-F} (written in the form \eqref{1.3},
assuming $\beta$ not an integer). As these authors point out, there is a large variety of potentials that are neither periodic nor incommensurate and the study of deterministric `pseudo-random' systems is of broad interest beyond quantum chaos, including to theoretical computer science. It is believed that the spectral theory of \eqref{1.2} and \eqref{1.3}, at least for $\beta>2$, resembles
that of $\rm{Schr\ddot{o}dinger}$ operator with a random potential, even at small disorder $\lambda$.
They were proposed in \cite{G-F}, \cite{B-F} as `pseudo-random' models and discussed heuristically.
So far, the main rigorous results for \eqref{1.2} assume $|\lambda|$ large.
See in particular H.~Kr\"uger's paper \cite{K1} and the related references.
For $0<|\lambda|\leq 1$, there are only a few contributions and they relate either to somewhat atypical frequencies $\omega$ or modifications of \eqref{1.2}
(see \cite{K2} and the discussion in that paper).
Thus at this point, there is no satisfactory mathematical theory that explains the expected phenomenology (which will be stated explicitly in \S3).

Taking $\omega=\frac{1+\sqrt 5}2$ the golden mean ratio, the purpose of this Note is to carry out some numerics related to the spectrum and
eigenstates of the operator
\be\label{1.4}
[Hu](n) =2(\cos 2\pi n^2\omega) u_n+u_{n+1} + u_{n-1}
\ee
which is the skew shift counterpart of Harper's equation.
The interest of such numerics is two-fold.
Firstly, it gives some understanding how, on a finite scale, the eigenvalue distribution and eigenvector localization compares
with the Harper case
\be\label{1.5}
[Hu](n)=2(\cos 2\pi n\omega) u_n+ u_{n+1}+ u_{n-1}.
\ee
Secondly, using numerics, one can in fact prove certain spectral properties of the full operator.
We illustrate this with the modest example of an upperbound on the size of possible gaps (if any) in the spectrum of \eqref{1.4}.
Obviously larger scale numerics would likely lead to better estimates.
We also indicate how, `in principle', appropriate numerics may permit one to start the multi-scale analysis underlying the approach for large $\lambda$,
in order to prove small $\lambda$ results (for instance in the context of \cite{K1}).

Returning to \eqref{1.3} and more generally operators of the form
\be\label{1.6}
[Hu](n)= f(n^\beta) u_n+u_{n+1} +u_{n-1}
\ee
with $f$ a nonconstant, continuous periodic function on $\mathbb R$, the case $\beta>0, \beta\not\in \mathbb Z$ may be studied by different methods.
It was shown for instance in \cite{K3} that for $H$ as in \eqref{1.6},
\be\label{1.7}
\Spec H = [\min f-2, \max f+2].
\ee

The paper is organized as follows. In the next section, we briefly review several aspects of the theory of 1D lattice $\rm{Schr\ddot{o}dinger}$ operator with various types of potentials, the almost Mathieu operator and also the conjectured behavior of models with weakly mixing potentials (for instance governed by skew-shift dynamics). In Section 3, we explain how information on certain spectral parameters (such as location and gaps in the spectrum) for the full Hamiltonian $H$ may be derived from properties of the eigenvalues and eigenvectors of `finite models' obtained by restriction of $H$ to a finite interval. In Section 4, these considerations are further specified in the context of the skew shift potential, which is our main interest in this work. The numerical implementation appears in Section 5. We compare the Harper model (the Almost Mathieu operator at the critical coupling $\lambda =1$) with its skew shift counterpart, with emphasize on their spectra, eigenfunction localization and Lyapunov exponents through the
energy range. Section 6 summarizes the conclusions and stresses some further research perspectives.

\bigskip

\section{Some background on discrete Schr\"odinger operators}

We only discuss the 1D model (on the lattice $\mathbb Z$).
Thus $\mathcal H =\ell^2(\mathbb Z)$ is the underlying Hilbert space.
We consider self-adjoint operators of the form
\be\label{2.1}
H=V+\Delta
\ee
where $V$ is a diagonal operator given by a bounded potential, i.e.
\be\label{2.2}
V=\sum_{n\in\mathbb Z} V_n (e_n\otimes e_n)
\ee
with $\{e_n\}$ the unit vectors of $\mathcal H$, $\Delta =$ lattice Laplacian on $\mathbb Z$, given by
\be\label{2.3}
\Delta =\sum_{n\in\mathbb Z} (e_n \otimes e_{n+1} + e_{n+1} \otimes e_n).
\ee
The potential is often introduced by evaluation of a given bounded function \hfill\break
$f:\Omega\to \mathbb R$ on the orbits of some dynamical system
$(\Omega, \mu, T)$ with $T$ an ergodic measure preserving transformation. Thus
\be\label{2.4}
V_n=f(T^n x)
\ee
with $x\in\Omega$ some base point.
Recall that if we start from a general bounded sequence $(V_n)_{n\in \mathbb Z}$, the format \eqref{2.4} may always be attained
by a classical construction which consists in taking for $\Omega$ the pointwise closure in $\mathbb R^{\mathbb Z}$ of the orbit of
$(V_n)_{n\in\mathbb Z}$ under the left shift $T$
\be\label{2.5}
(Tx)_n =x_{n+1} \ \text { for } \ x=(x_n)_{n\in\mathbb Z}
\ee
and for $\mu$ a Banach invariant mean (ergodicity is then achieved by decomposing the resulting dynamical system $(\Omega, \mu, T)$ in its
ergodic components).

Returning to \eqref{1.2}, there are two models extensively studied by many authors.
The first is the case of a random potential
\be\label{2.6}
V_\omega =\big(V_n(\omega)\big)_{n\in \mathbb Z}
\ee
with $V_n$ independent identically distributed random variables.
This is the 1D Anderson model which describes for instance transport in inhomogenous media (such as a metal alloy).
In this situation, $H=H_\omega$ and its main spectral features are the following (assuming the distribution non-constant).

\begin{itemize}
\item [(2.7)] \ For almost all $\omega$, the spectrum $\Spec (H_\omega)$ is the set
$$
\sum = [-2, 2] +\text { range } V
$$

\item[(2.8)] \ For almost all $\omega$, $H_\omega$ has pure point spectrum and the corresponding eigenfunctions are exponentially localized i.e. decay
exponentially for $|n|\to\infty$.
\end{itemize}

\stepcounter{equation}
Property (2.8) is referred to as "Anderson localization".
Let us emphasize that this discussion is 1D, which is a well understood theory (unlike in dimension $\geq 2$ where different
phenomena are expected).
There are many reference works, see for instance \cite{F-P}.
It is convenient to replace $V$ by $\lambda V$, where $\lambda\not= 0$ is called the disorder parameter.
Thus (2.8) remains true, also for small $\lambda$.

Another extensively studied model is \eqref{1.1}, i.e. the almost Mathieu operator with almost periodic potential
\stepcounter{equation}

\be\label{2.9}
V_n =2\lambda\cos \big(2\pi(n\omega+\theta)\big).
\ee
There is a long list of contributors and contributions to this subject, but we restrict ourselves to a summary of the final
state of the theory.
We always assume $\omega$ irrational, implying ergodicity of the potential and $\Spec H_{\lambda, \omega, \theta}$ independent of $\theta$
(as a set).
Say that $\omega$ is diophantine if it satisfies for some constant $C>0$
\be\label{2.10}
\Vert k\omega\Vert> \frac 1C |k|^{-C} \text { for all } k\in\mathbb Z\backslash \{0\}.
\ee

\stepcounter{equation}\label{2.11}
\begin{itemize}
\item[(2.11)] \ For $\lambda \not= 0$, $\Spec H_{\lambda, \omega}$ is a Cantor set of Lebesque measure
\be\label{2.12}
|\Spec H_{\lambda, \omega}|= 4|1-\lambda|
\ee
assuming moreover $\omega$ diophantine.

\item[(2.13)] \ For $0\leq |\lambda| <1, H_{\lambda, \omega}$ has purely absolutely continuous (ac) spectrum.

\stepcounter{equation}
\item[(2.14)] \ For $|\lambda|>1$, for almost all $\theta$, $H_{\lambda, \omega, \theta}$ has pure point spectrum and exhibits Anderson localization.

\stepcounter{equation}
\item[(2.15)] \ For $|\lambda|=1$ and almost all $\theta$, $H_{\lambda, \omega, \theta}$ has purely singular continuous spectrum.
\end{itemize}

See \cite{A-J}, \cite {J-K}, \cite{J}, \cite{G-J-L-S} for statements (2.11), \eqref{2.12}, (2.14), (2.15) respectively; (2.13) is a consequence of Aubry duality.
Some of the above results were actually proven in a stronger form, but this is not important for what follows.
Note also that, by \eqref{2.12}, the Harper operator has spectrum of zero Lebesgue measure which forces it automatically to
be Cantor.

Let us consider next the skew shift Schr\"odinger operator obtained by taking in \eqref{2.4} for $T$ the skew shift acting on the $2$-torus
$\mathbb T^2$, i.e.
$$
T(x, y)= (x+y, y+\omega)
$$
and $f=\lambda\cos \theta$ acting on the first coordinate.
Thus
\stepcounter{equation}
\be\label{2.16}
H_{\lambda,\omega, x, y} =2 \lambda V+\Delta \ \text { with } \ V_n =\cos \Big(x+ny+\frac {n(n-1)}2 \omega\Big)
\ee
and we assume again $\omega$ diophantine.

It is conjectured that \eqref{2.16} displays a spectral behaviour similar to the random case \eqref{2.6}, also for small $\lambda$.
In particular that $\Spec H_{\lambda, \omega, x, y}$ has no gaps and for most $x, y, H_{\lambda, \omega, x, y}$ has pure point spectrum
with Anderson localization.
This last statement has been proven for $|\lambda|$ sufficiently large and $(\omega, x, y) \in \mathbb T^3$ taken in a suitable set of positive
measure.
 From this respect, there is no difference with the almost periodic case; in particular \eqref{1.1} satisfies (2.14).
On the other hand, it was shown more recently in \cite{K1}, again for $\lambda$ sufficiently large (and $\omega$ satisfying
the stronger DC $\Vert k\omega\Vert>\frac c{k^2}$), that $\Spec H_{\lambda, \omega, x, y}$ contains at least an interval,
hence is not a Cantor set.

Returning to \eqref{2.1}, the 1D lattice $\rm{Schr\ddot{o}dinger}$ operator can be studied using the transfer matrix formalism (which is a powerful tool
specific to the 1D situation).
Given an arbitrary sequence $(u_n)_{n\in\mathbb Z}$, the equation
$$
Hu =Eu
$$
is equivalent with
\be\label{2.17}
\begin{pmatrix} u_{n+1} \\ u_n\end{pmatrix} =M_n (E) \begin{pmatrix} u_1\\ u_0\end{pmatrix}
\ee
where the transfer matrix $M_n(E)\in SL_2(\mathbb R)$ is given by
\be\label{2.18}
M_n(E)=\prod^1_{j=n} \begin{pmatrix} V_j-E&-1\\ 1&0\end{pmatrix}.
\ee
Assuming $V_n$  given by \eqref{2.4}, define
\be\label{2.19}
L_n(E) =\frac 1n\int \log \Vert M_n(E, x)\Vert dx
\ee
and the Lyapunov exponent
\be\label{2.20}
L(E) =\lim_{n\to\infty} L_n(E).
\ee
For ergodic $T$, one has
\be\label{2.21}
L(E) =\lim_{n\to \infty} \frac 1n\log \Vert M_n(E, x)\Vert \qquad x  \ a.s.
\ee

The Lyapunov exponent plays a prominent role in the spectral theory of $H$.
Recall in particular Kotani's theorem, stating the following:

Let $(a, b)\subset \mathbb R$ be an interval.
Then $H$ has no ac spectrum in $(a, b)$, i.e.
\be\label{2.22}
\sum_{ac} (H) \cap (a, b) =\phi
\ee
if and only if
\be\label{2.23}
L(E)>0 \text { for almost all energies $E\in (a, b)$}
\ee
(again assuming the underlying transformation $T$ ergodic).

For both \eqref{1.1} and \eqref{2.16}, the Lyapunov exponent satisfies
\be\label{2.24}
L(E) \geq\max (0, \log|\lambda|).
\ee
Moreover, for the almost Mathieu operator, (2.14) becomes an equality if \hfill\break
$E\in \Spec H$, which in view of Katani's theorem is consistent
with (2.13)--(2.15).
Note however that while positivity of the Lyapunov exponent excludes ac spectrum, it does not necessarily imply point spectrum.
For random potentials, the Lyapunov exponents are positive, also at small disorder.
The same is conjectured to be true for the skew-shift potential \eqref{2.16}, which is a central issue in this discussion.
Establishing positivity of the Lyapunov exponents of \eqref{2.16} (at a given $\lambda$) immediately implies absence of ac spectrum.
However, it is also the first step in a multi-scale analysis leading to Anderson localization, so far only proven for large
$\lambda$.
Going one step further, sufficient information about the function
\be\label{2.25}
\frac 1n \log \Vert M_n (E; x, y)\Vert
\ee
of the three variables $(E, x, y)$ at a sufficiently large scale $n$ would already suffice for this purpose.
In a similar vein, it would permit an extension of the theory developed in \cite{K1} to other values of $\lambda$, proving that
$\Spec H_{\lambda, \omega, x, y}$ is not a Cantor set.
While in principle a computer assisted approach to some of the conjectures above may be possible because it only involves the analysis at some fixed scale, the technical difficulties are considerable.
First, one would have to determine an appropriate formulation of the bootstrap process and the ncessary numerical input at some fixed cale $n$.
This bootstrap argument depends on a rather lengthy and sophisticated mathematical analysis that would have to be improved.
Also, the required scale $n$ may be computationally infeasible.
On a much more modest level, it turns out that the numerical evaluation of $L_n(E)$ for large $n$ is already a nontrivial
task, due to the accumulation of errors in calculating the matrix products.
Such numerics were carried out several years ago by W.~Schlag (private communication) and turned out to be inconclusive at large scale for the above reason.

A few more comments on `pseudo-random' potentials:
while the skew shift $T$ discussed above is an example of a weakly mixing potential, deterministic strongly mixing potentials may be obtained, considering for instance the doubling map $x\mapsto 2x$ acting on $\mathbb T$ or a hyperbolic toral automorphism $A\in SL_2(\mathbb Z)$
acting on $\mathbb T^2$.

Thus one would define
\be\label{2.26}
V_n=2\lambda \cos \pi 2^n\omega
\ee
in the first case and
\be\label{2.27}
V_n=2\lambda \cos \Big\langle A^n \begin{pmatrix} x\\ y\end{pmatrix}, e_1\Big\rangle
\ee
in the second setting.
For these models, a closer analogy with the random $\rm{Schr\ddot{o}dinger}$ operator theory may be proven.
In particular, at small $\lambda$, one obtains positivity of the Lyapunov exponents and Anderson localization.
However, the underlying techniques do not seem applicable to skew shift dynamics.

For the operator \eqref{1.3}, \eqref{1.6} with $\beta>1$ and $\beta\not\in \mathbb Z$ (this last assumption is essential),
the semi-continuity methods from \cite {L-S} may be applied.
In addition to \eqref{1.7}, it was proven in \cite{K3} that for $H_{f, \beta}$ defined in \eqref{1.6} and $1\leq r<\beta<r+1,~r\in~\mathbb Z$,
\be\label{2.28}
\sum_{ac} (H_{f, \beta}) \subset\bigcap_{a_0, a_1, \ldots, a_r\in\mathbb R} \, \sum_{ac} (H_{a_0, a_1, \ldots, a_r})
\ee
where $H_{a_0, \ldots, a_r}$ is the $\rm{Schr\ddot{o}dinger}$ operator with potential $f\big(\sum^r_{j=0} a_j n^j)$.

The results from \cite{K4} have further implications on the Lyapunov exponents of \eqref{1.6}.
In particular, for \eqref{1.3}, i.e. $f(t)=2\lambda \cos t$, Lyapunov exponents do not vanish for all energies $E\not\in \mathcal E_\lambda$ , where
$\mathcal E_\lambda \subset \mathbb R$ is a set satisfying $|\mathcal E_\lambda|\to 0$ for $\lambda\to 0$.

\section
{Finite scale restrictions}

It is possible to derive spectral information for te full Hamiltonian $H$ by studying its restriction to finite boxes.

Given an interval $I$ in $\mathbb Z$, denote $H_I$ the finite matrix (indexed by $I$) defined by
$$
H_I(m, n)= H(m, n) \text { for } m, n\in I.
$$
Thus, if for instance $I=[0, N-1]$ and $H$ is given by \eqref{2.1},
$$
H_I= \left[\begin{matrix}
V_0&1&0&0&\cdots&0\\
1&V_1&1&0&\ldots&0\\
0&1&V_2&1&\ldots &0\\
0&0&1&V_3&&\vdots\\
\vdots&\vdots&0&1 &\ddots&1\\
\vdots&\vdots&\vdots&\vdots &&&\\
0&0&0&0&&1V_{n-1}\\
\end{matrix}\right]
$$
We will need a few  elementary results that relate $\Spec H$ with $\Spec H_I$ and can be used for numerical calculations.
They are well-known and we record them here in the explicit form needed.

\begin{lemma}\label{Lemma1}
Let $I=[a, a+N-1]\subset\mathbb Z$.

Let $E\in \mathbb R, \xi =\sum_{n\in I} \xi_n e_n, \Vert\xi\Vert=\big(\sum_{n\in I}|\xi_n|^2\Big)^{\frac 12} =1$ such that
\be\label{3.1}
\Vert H_I\xi -E\xi\Vert <\ve.
\ee
Then
\be\label{3.2}
\text{dist\,} (E, \Spec H)<\ve+|\xi_a|+|\xi_{a+N-1}|.
\ee
\end{lemma}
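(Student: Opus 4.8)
The plan is to use the normalized vector $\xi$, extended by zero to all of $\mathbb Z$, as an approximate eigenvector for the \emph{full} operator $H$ on $\ell^2(\mathbb Z)$, and then to convert the smallness of $\|(H-E)\xi\|$ into a bound on $\mathrm{dist}(E,\Spec H)$ via self-adjointness of $H$. Concretely, I would first set $\xi_n=0$ for $n\notin I$, so that $\xi\in\ell^2(\mathbb Z)$ still has $\|\xi\|=1$, and then track exactly how $H\xi$ differs from the truncated action $H_I\xi$.

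The key observation is that $H$ and $H_I$ differ only through the off-diagonal part of $\Delta$ that couples $I$ to its complement. Since $\xi$ vanishes outside $I$, a direct computation gives $(H\xi)_n=(H_I\xi)_n$ for every $n\in I$: at the two endpoints the coupling dropped by the truncation would have multiplied the zero entries $\xi_{a-1}$ or $\xi_{a+N}$, so nothing is lost. The \emph{only} sites at which $H\xi$ is nonzero outside $I$ are the two exterior neighbours $a-1$ and $a+N$, where $(H\xi)_{a-1}=\xi_a$ and $(H\xi)_{a+N}=\xi_{a+N-1}$. Because the index sets $I$ and $\{a-1,a+N\}$ are disjoint, the vector $(H-E)\xi$ splits orthogonally and its squared norm adds:
\be
\|(H-E)\xi\|^2=\|H_I\xi-E\xi\|^2+|\xi_a|^2+|\xi_{a+N-1}|^2.
\ee
Using $\sqrt{\alpha^2+\beta^2+\gamma^2}\le\alpha+\beta+\gamma$ for nonnegative reals together with the hypothesis \eqref{3.1}, this yields
\be
\|(H-E)\xi\|<\ve+|\xi_a|+|\xi_{a+N-1}|.
\ee

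Finally I would invoke the spectral theorem: since $H$ is self-adjoint, $\|(H-E)^{-1}\|=1/\mathrm{dist}(E,\Spec H)$ whenever $E\notin\Spec H$, so for the unit vector $\xi$ one has $\mathrm{dist}(E,\Spec H)\le\|(H-E)\xi\|$ (and the bound is trivial when $E\in\Spec H$). Combining this with the previous display gives \eqref{3.2}. The only step that requires genuine care is the boundary bookkeeping in the middle paragraph—verifying that the truncation error is supported precisely on the two exterior sites $a-1,a+N$ and equals the endpoint amplitudes $\xi_a,\xi_{a+N-1}$, with no leakage at the endpoints of $I$ themselves; the orthogonal splitting and the spectral-theorem estimate are then routine.
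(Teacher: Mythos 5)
Your proposal is correct and follows essentially the same route as the paper: extend $\xi$ by zero, observe that $(H-E)\xi$ agrees with $H_I\xi-E\xi$ on $I$ and has only the two exterior entries $\xi_a$, $\xi_{a+N-1}$ at sites $a-1$, $a+N$, then apply the spectral theorem for the self-adjoint $H$. The only (harmless) difference is that you exploit orthogonality of the supports to get the Pythagorean identity before relaxing to the sum, whereas the paper simply applies the triangle inequality to reach the same bound.
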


\begin{proof}

Define the vector $\eta \in \ell^2(\mathbb Z)$ by
$$
\begin{cases}
\eta_n=\xi_n \text { if } n\in I\\
\eta_n=0 \text { if } n\in\mathbb Z\backslash I.
\end{cases}
$$
Obviously $\Vert\eta\Vert =\Vert\xi\Vert= 1$.

We compute $H_\eta$
$$
(H\eta)_n=V_n \eta_n+\eta_{n+1} +\eta_{n-1}=
\begin{cases}
V_n\xi_n+\xi_{n+1}+\xi_{n-1} =(H_I\xi)_n \text { if } a<n<a+N-1\\
V_a \xi_a +\xi_{a+1} =(H_I\xi)_a \text { if } n=a\\
V_{a +N-1} \xi_{a+N-1} +\xi_{a+N-2} =(H_I\xi)_{a+N-1} \text { if } n=a+N-1\\ \xi_a \text { if } n=a-1\\
\xi_{a+N-1} \text { if } n=a+N\\
0 \ \text { if } n<a-1 \text { or }  n> a+N.
\end{cases}
$$
Hence
\be\label{3.3}
H\eta =\sum_{n\in I} (H_I \xi)_n e_n+\xi_a e_{a-1} +\xi_{a+N-1} \, e_{a+N}
\ee
and
\begin{align}\label{3.4}
H\eta-E\eta &= \sum_{n\in I} \big((H_I\xi)_n -E\xi_n\big) e_n+\xi_a e_{a-1}+\xi_{a+N-1} \, e_{a+N}\nonumber\\
\Vert H\eta -E\eta\Vert &\leq \Vert H_I\xi -E\xi\Vert +|\xi_a|+|\xi_{a+N-1}|\nonumber\\
&< \ve +|\xi_a|+|\xi_{a+N-1}|.
\end{align}
\end{proof}

Conclusion \eqref{3.2} then follows from \eqref{3.4} and the spectral theorem.

As an immediate consequence, we get

\begin{corollary}\label{Corollary2}

{}{}{} {} {} \qquad

Let $I=[a, a+N-1] \subset \mathbb Z$.

Let $\xi^{(1)}, \ldots, \xi^{(N)}$ be the normalized eigenvectors of $H_I$ and $\lambda_1, \ldots, \lambda_N$ the corresponding eigenvalues.

Then, for each $j=1, \ldots , N$
\be\label{3.5}
\text{\rm dist\,} (\lambda_j, {\rm \Spec} H)\leq |\xi_a^{(j)}|+|\xi_{a+N-1}^{(j)}|
\ee
\end{corollary}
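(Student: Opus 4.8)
The plan is to apply Lemma \ref{Lemma1} directly to each eigenpair, exploiting the fact that an exact eigenvector makes the residual in \eqref{3.1} vanish. First I would fix $j\in\{1,\ldots,N\}$ and observe that $\xi^{(j)}$ is, by hypothesis, a normalized vector supported on $I$, hence an admissible choice for the vector $\xi$ in the statement of Lemma \ref{Lemma1}. Since $\xi^{(j)}$ is an eigenvector of $H_I$ with eigenvalue $\lambda_j$, we have $H_I\xi^{(j)}=\lambda_j\xi^{(j)}$, so that the residual norm $\Vert H_I\xi^{(j)}-\lambda_j\xi^{(j)}\Vert$ is exactly $0$.

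The second step is to feed this into Lemma \ref{Lemma1} with the choice $E=\lambda_j$. Because the residual vanishes, the hypothesis \eqref{3.1} is satisfied for \emph{every} $\ve>0$, no matter how small. The conclusion \eqref{3.2} of the lemma therefore yields
\be
\text{dist\,}(\lambda_j,\Spec H)<\ve+|\xi_a^{(j)}|+|\xi_{a+N-1}^{(j)}|
\ee
for all $\ve>0$.

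The final step is to let $\ve\to0^{+}$. Since the left-hand side is independent of $\ve$ and the strict inequality holds for arbitrarily small positive $\ve$, the bound passes to the limit and gives the non-strict estimate \eqref{3.5}. This transition from the strict inequality of Lemma \ref{Lemma1} to the non-strict inequality claimed here is the only point that requires any attention, and it is entirely routine. I do not expect any genuine obstacle: the corollary is simply the specialization of the lemma to the case of a vanishing residual, which is why it is stated as an immediate consequence.
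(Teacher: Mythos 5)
Your proof is correct and is exactly the argument the paper intends: the corollary is stated as an immediate consequence of Lemma \ref{Lemma1}, obtained by taking $\xi=\xi^{(j)}$, $E=\lambda_j$ (so the residual in \eqref{3.1} vanishes and the hypothesis holds for every $\ve>0$), and letting $\ve\to 0^{+}$ to pass from the strict bound \eqref{3.2} to the non-strict bound \eqref{3.5}.
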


This property is clearly of interest in order to establish an upperbound on possible gaps in the spectrum of $H$, by considering eigenvalues
and eigenvectors of a restriction $H_I$.
In view of \eqref{3.5}, only those eigenvalues of $H_I$ are of interest for which the corresponding eigenvector is small at the edges
of the box $I$.
Recall also that for the skew-shift $\rm{Schr\ddot{o}dinger}$ operator, one expects strong localization of the eigenvectors so the boundary
contribution for most of them should be quite small.

Next, we prove in some sense a converse property.

\begin{lemma}\label{Lemma3}

{}{}{}\qquad

Let $N\geq 2$ be a positive integer.
Let $0<\ve<1$ be arbitrary.

If $E\in \text{\rm Spec\,} H$, then there is an interval $I\subset \mathbb Z$ of size $N$ and an eigenvalue $E'$ of $H_I$ such that
\be\label{3.6}
|E-E'|<\sqrt \frac2N +\ve.
\ee
\end{lemma}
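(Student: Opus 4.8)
The plan is to run the argument of Lemma~\ref{Lemma1} in reverse: instead of cutting down a box eigenvector to the whole line, I produce from a \emph{global} approximate eigenvector a \emph{local} one on a well-chosen window. Since $E\in\Spec H$ and $H$ is bounded and self-adjoint, the spectral theorem (Weyl's criterion) supplies a unit vector $\psi\in\ell^2(\mathbb Z)$ with $\|H\psi-E\psi\|<\delta$, where $\delta>0$ is to be fixed at the end; truncating $\psi$ to a large box and renormalizing, and using that $H$ is bounded, I may assume $\psi$ is finitely supported. Write $g=(H-E)\psi$, so $\|g\|<\delta$ and $\sum_n|\psi_n|^2=1$. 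For each $a\in\mathbb Z$ put $I_a=[a,a+N-1]$ and let $\xi^{(a)}=\sum_{n\in I_a}\psi_n e_n$ be the corresponding restriction. Exactly as in the computation of $H\eta$ in Lemma~\ref{Lemma1}, $H_{I_a}\xi^{(a)}-E\xi^{(a)}$ coincides with $g$ in the interior of $I_a$ and differs from it only at the two endpoints, where the dropped hopping terms $\psi_{a-1}$ and $\psi_{a+N}$ appear (here $N\ge 2$ guarantees that the endpoints $a$ and $a+N-1$ are distinct sites). I will also use the elementary fact that for the real symmetric matrix $H_{I_a}$ and any $v\ne 0$ one has $\mathrm{dist}(E,\Spec H_{I_a})\le \|H_{I_a}v-Ev\|/\|v\|$.

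The crux is to pick the window so that both the boundary error is small and the normalization is harmless; the point is to \emph{average over all translates simultaneously} rather than try to lower-bound the mass of any single window (which can be as small as $N/|\supp\psi|$). Only two sums are needed. First, since each site lies in exactly $N$ of the windows $I_a$,
\be
\sum_{a\in\mathbb Z}\|\xi^{(a)}\|^2=N\sum_n|\psi_n|^2=N .
\ee
Second, expanding $\|H_{I_a}\xi^{(a)}-E\xi^{(a)}\|^2$ as the interior contribution $\sum_{a<n<a+N-1}|g_n|^2$ plus the two endpoint terms $|g_a-\psi_{a-1}|^2$ and $|g_{a+N-1}-\psi_{a+N}|^2$, summing over $a$, and using $\sum_n|\psi_{n\pm1}|^2=1$ together with Cauchy--Schwarz on the cross terms, one obtains
\be
\sum_{a\in\mathbb Z}\|H_{I_a}\xi^{(a)}-E\xi^{(a)}\|^2\le N\|g\|^2+4\|g\|+2< N\delta^2+4\delta+2 .
\ee

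Now the normalization difficulty dissolves. Restricting to the finitely many windows with $\|\xi^{(a)}\|>0$ (positive denominators), one has the trivial inequality $\min_a\big(\|H_{I_a}\xi^{(a)}-E\xi^{(a)}\|^2/\|\xi^{(a)}\|^2\big)\le \big(\sum_a\|H_{I_a}\xi^{(a)}-E\xi^{(a)}\|^2\big)\big/\big(\sum_a\|\xi^{(a)}\|^2\big)$, which by the two displays is at most $\delta^2+4\delta/N+2/N$. Choosing a window $I=I_a$ attaining this minimum and setting $\hat\xi=\xi^{(a)}/\|\xi^{(a)}\|$, the finite-matrix estimate produces an eigenvalue $E'$ of $H_I$ with
\be
|E-E'|\le \sqrt{\delta^2+\tfrac{4\delta}{N}+\tfrac{2}{N}} .
\ee
Since the right-hand side tends to $\sqrt{2/N}$ as $\delta\to 0$, it suffices to fix $\delta$ small enough (depending only on $N$ and $\ve$) that the bound is $<\sqrt{2/N}+\ve$, which is precisely \eqref{3.6}. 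The only genuine obstacle is the one just handled — that a single size-$N$ window may carry very little of the mass of $\psi$ — and the ratio-of-sums inequality is exactly what lets me avoid any per-window lower bound while losing nothing in the constant $\sqrt{2/N}$.
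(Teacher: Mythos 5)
Your proposal is correct and follows essentially the same route as the paper's own proof: both start from a finitely supported approximate eigenvector (via the spectral theorem), restrict it to all length-$N$ windows $[a,a+N-1]$, and use an averaging/pigeonhole argument over the window position $a$ — your ratio-of-sums inequality is exactly the paper's summed claim \eqref{3.11}, with the constant $\sqrt{2/N}$ arising in both cases from the two boundary hopping terms each carrying total mass $\Vert\eta\Vert^2=1$. The only differences are cosmetic: the paper fixes the Weyl accuracy at $\ve/4$ and tracks explicit constants, whereas you keep $\delta$ free and let $\delta\to 0$ at the end.
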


\begin{proof}
Since $E\in \Spec H$, there is a finitely supported vector $\eta =\sum' \eta_n e_n$, $\Vert\eta\Vert=1$ such that
\be\label{3.7}
\Vert H\eta -E\eta\Vert <\frac \ve 4.
\ee
Let $I=[a, a+N-1]$, with $a\in\mathbb Z$ to be specified, and define the vector \hfill\break $\eta'= \sum_{n\in I} \eta_n e_n$. Then
$$
(H_I\eta')_n=\begin{cases}
V_n \eta_n+ n_{n-1}+\eta_{n+1} =(H\eta)_n \text { if } a<n<a+N-1\\
V_a \eta_a +\eta_{a+1} =(H\eta)_a -\eta_{a-1} \text { if } n=a\\
V_{a+N-1} \eta_{a+N-1} +\eta_{a+N-2} =(H\eta)_{a+N-1} -\eta_{a+N} \text { if } n=a+N-1.
\end{cases}
$$
Hence
$$
H_I\eta'-E\eta'=\sum^{a+N-2}_{n=a+1} (H\eta-E\eta)_n \, e_n +\big(((H-E)\eta)_a -\eta_{a-1}\big) e_a +
\big(((H-E)\eta)_{a+N-1}-\eta_{a+N}\big) e_{a+N-1}.
$$
Therefore
$$\Vert H_I\eta'-E\eta'\Vert^2_2=\ell_1+\ell_2+\ell_3,$$where
\begin{align}
\ell_1&=\sum^{a+N-2}_{n=a+1} |\big((H-E)\eta\big)_n|^2\\
\ell_2&=|\big((H-E)\eta\big)_a -\eta_{a-1}|^2\\
\ell_3&=|\big((H-E)\eta\big)_{a+N-1} -\eta_{a+N}|^2.
\end{align}
We claim that we can choose $a\in \mathbb Z$ such that
\be\label{3.11}
\ell_1+\ell_2+\ell_3<\Big(\frac {2+\ve}N+\frac {\ve^2}{16}\Big) \Big[\sum^{a+N-1}_{n=a} |\eta_n|^2\Big] =
\Big(\frac {2+\ve} N+\frac {\ve^2}{16}\Big) \Vert\eta'\Vert^2_2.
\ee

To prove that there exists $a\in \mathbb Z$ such that \eqref{3.11} holds, simply sum both  sides of \eqref{3.11} over $a\in\mathbb Z$
and show that the left hand side is smaller than the right hand side.

Thus
$$
\begin{aligned}
\sum_{a\in \mathbb Z} \ell_1 =\sum_{a\in\mathbb Z} \, \sum^{N-2}_{n=1} |\big((H-E)\eta)_{a+n}|^2&=\\
 (N-2) \sum_{n\in\mathbb Z} |\big((H-E)\eta\big)_n|^2 &=(N-2) \Vert H\eta-E\eta\Vert^2\\
&<(N-2)\frac {\ve^2}{16}
\end{aligned}
$$
by \eqref{3.7}.

Next
$$
\begin{aligned}
\sum_{a\in\mathbb Z} \ell_2&= \sum_{a\in\mathbb Z}|\big((H-E)\eta\big)_a|^2\\
&+\sum_{a\in\mathbb Z} |\eta_{a-1}|^2\\
&-2\text{\,Re\,} \Big[\sum_{a\in \mathbb Z} \big((H-E)\eta\big)_a \, \bar\eta_{a-1}\Big]\\
&\leq \Vert (H-E)\eta\Vert^2+\Vert\eta\Vert^2 +2\sum_{a\in\mathbb Z} |\big((H-E)\eta\big)_a | \, |\eta_{a-1}|\\
&<\frac {\ve^2}{16} +1+2\sum_{a\in\mathbb Z}|\big((H-E)\eta\big)_a| \ |\eta_{a-1}|.
\end{aligned}
$$
By the Cauchy-Schwarz inequality
$$
\begin{aligned}
\sum_{a\in \mathbb Z} |\big((H-E)\eta\big)_a | \, |\eta_{a-1}|&\leq \Big(\sum_{a\in\mathbb Z}|\big(( H-E)\eta\big)_a|^2\Big)^{\frac 12}
\Big(\sum_{a\in\mathbb Z} |\eta_{a-1}|^2\Big)^{\frac 12}\\
&=\Vert(H-E)\eta\Vert . \Vert \eta\Vert_2<\frac \ve 4.
\end{aligned}
$$
Thus
$$
\sum_{a\in \mathbb Z} \ell_2 < 1+\frac {\ve^2}{16} +2.\frac \ve 4 =\Big(1+\frac\ve 4\Big)^2
$$
and similarly
$$
\sum_{a\in\mathbb Z}\ell_3 <\Big( 1+\frac\ve 4\Big)^2.
$$
Therefore
\begin{align}\label{3.12}
\sum_{a\in\mathbb Z} [\ell_1+\ell_2+\ell_3]& < (N-2) \frac {\ve^2}{16} +2\Big(1+\frac\ve 4\Big)^2\nonumber\\
& = N\frac {\ve^2}{16} +2\Big(1+\frac\ve 2\Big).
\end{align}
Summing the right hand side of \eqref{3.11} over $a\in\mathbb Z$, we obtain indeed
$$
\begin{aligned}
&\Big(\frac {2+\ve}N+\frac {\ve^2}{16}\Big) \sum_{a\in\mathbb Z} \Big(\sum^{a+N-1}_{n=a} |\eta_n|^2\Big) =\\
& \Big(\frac{2+\ve}N+\frac {\ve^2}{16}\Big) N\, \sum_{n\in\mathbb Z}|\eta_n|^2 =2+\ve+\frac {\ve^2}{16} N= \eqref{3.12}.
\end{aligned}
$$
This proves that we can find some $a\in\mathbb Z$ for which \eqref{3.11} holds, thus such that
\be\label{3.13}
\Vert H_I\eta'-E\eta'\Vert^2 < \Big(\frac {2+\ve}N +\frac {\ve^2}{16} \Big) \Vert\eta'\Vert^2_2.
\ee
Define
$$
\xi=\frac {\eta'}{\Vert\eta'\Vert_2}.
$$
which satisfies by \eqref{3.13}
$$
\Vert H_I\xi -E\xi\Vert <\Big(\frac {2+\ve}N+\frac {\ve^2}{16}\Big)^{\frac 12} <\sqrt{\frac 2N} +\ve.
$$
This means that there is an eigenvalue $E'$ of $H_I$ satisfying
$$
|E-E'|<\sqrt{\frac 2N}+\ve
$$
which Proves Lemma \ref{Lemma3}.
\end{proof}

The interest of Lemma \ref{Lemma3} is to establish conversely the presence of gaps in $\Spec H$.
The numerics in this paper also require evaluation of $\sup (\Spec H)$ and $\inf (\Spec H)$  for which we have a slightly
better estimate using spectra of finite restrictions.

\begin{lemma}\label{Lemma4}
Under the assumptions of Lemma \ref{Lemma3}, there is an interval $I\subset\mathbb Z$ of size $N$ and an eigenvalue $E'$ of $H_I$ such that
\be\label{3.14}
E'>\sup (\Spec H)-\frac 2N-\ve.
\ee
\end{lemma}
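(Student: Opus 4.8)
The plan is to exploit the fact that $\sup(\Spec H)$ is a \emph{one-sided} variational quantity, so that I can work directly with the quadratic form $\langle H\,\cdot\,,\cdot\rangle$ rather than with the operator-norm defect $\Vert H_I\xi-E\xi\Vert$ used in Lemma \ref{Lemma3}. This is precisely what should gain the factor of two in the exponent: controlling a norm forces an $\ell^2$ (hence square-root) estimate on the boundary errors, whereas controlling a Rayleigh quotient costs only a linear boundary term.

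First I would write $E=\sup(\Spec H)$. Since $H$ is bounded and self-adjoint, $E=\sup_{\Vert\psi\Vert=1}\langle H\psi,\psi\rangle$, so by density of finitely supported vectors and continuity of the quadratic form I can fix a finitely supported $\eta$ with $\Vert\eta\Vert=1$ and $\langle H\eta,\eta\rangle>E-\tfrac\ve2$. For each $a\in\mathbb Z$ I set $I=[a,a+N-1]$ and let $\eta'=\sum_{n\in I}\eta_n e_n$ be the restriction of $\eta$ to the window $I$. Because the quadratic form of $H$ on a vector supported in $I$ agrees with that of $H_I$ (the terms leaving $I$ are annihilated by extension-by-zero), the variational characterization of the top eigenvalue $\lambda_{\max}(H_I)$ gives, whenever $\eta'\neq0$,
$$
\lambda_{\max}(H_I)\ \geq\ \frac{\langle H_I\eta',\eta'\rangle}{\Vert\eta'\Vert^2}=\frac{\langle H\eta',\eta'\rangle}{\Vert\eta'\Vert^2}.
$$

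The heart of the argument is then a sliding-window average over $a$, in the same spirit as the summation trick in Lemma \ref{Lemma3}. Summing $\langle H\eta',\eta'\rangle$ over $a$, each site is counted by $N$ windows and each nearest-neighbour bond by $N-1$ windows, so a direct computation yields
$$
\sum_{a\in\mathbb Z}\langle H\eta',\eta'\rangle=N\langle H\eta,\eta\rangle-\langle\Delta\eta,\eta\rangle,\qquad \sum_{a\in\mathbb Z}\Vert\eta'\Vert^2=N.
$$
Since $\Vert\Delta\Vert\le2$ we have $|\langle\Delta\eta,\eta\rangle|\le2$, whence
$$
\sum_{a\in\mathbb Z}\Big[\langle H\eta',\eta'\rangle-\big(E-\tfrac2N-\ve\big)\Vert\eta'\Vert^2\Big]\ \geq\ N\langle H\eta,\eta\rangle-2-\big(E-\tfrac2N-\ve\big)N\ >\ N\big(\ve-\tfrac\ve2\big)\ >\ 0.
$$
Hence at least one summand is positive; for that $a$ one necessarily has $\eta'\neq0$, and the Rayleigh bound above gives $\lambda_{\max}(H_I)>E-\tfrac2N-\ve$. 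Taking $E'=\lambda_{\max}(H_I)$, which is an eigenvalue of the finite Hermitian matrix $H_I$, proves \eqref{3.14}.

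The step I expect to require the most care is the bookkeeping in the sliding-window sum—correctly accounting for the fact that interior bonds lie in $N-1$ windows while sites lie in $N$ windows, since this is exactly what isolates the single boundary defect $\langle\Delta\eta,\eta\rangle$ and confines the loss to $2/N$. The real obstacle is conceptual rather than computational: one must recognize at the outset that testing against the quadratic form, instead of estimating $\Vert(H_I-E)\eta'\Vert$, is what halves the exponent, since the latter route reproduces only the weaker $\sqrt{2/N}$ bound of Lemma \ref{Lemma3}.
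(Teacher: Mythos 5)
Your proof is correct and follows essentially the same route as the paper: the paper's argument is exactly this sliding-window average, phrased as the operator identity $\frac 1N\sum_{a\in\mathbb Z} 1_{I_a}H1_{I_a}=H-\frac 1N\Delta$, applied to a near-maximizing test vector $\eta$ and combined with the Rayleigh-quotient bound for some window where the averaged inequality must hold pointwise. Your bond/site counting ($N-1$ windows per bond, $N$ per site) is just this identity written out termwise, so the two proofs coincide up to presentation.
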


\begin{proof}
We proceed again by an averaging argument.

Denote $I_a= [a, a+N-1]$.
Clearly
\be \label{3.15}
\frac 1N\sum_{a\in\mathbb Z} 1_{I_a} (x) 1_{I_a}(y) =\begin{cases}
1-\frac{|x-y|}N \text { if } x, y\in\mathbb Z, |x-y|\leq N\\
0 \ \text { if } \ x, y\in\mathbb Z, |x-y|>N.
\end{cases}
\ee
Denote for $\eta\in\ell^2 (\mathbb Z)$ by $\eta_I$ the vector $\sum_{n\in I} \eta_n e_n$.

It follows from \eqref{3.15} that
$$
\frac 1N\sum_{a\in\mathbb Z} H_{I_a} =\frac 1N \sum_{a\in\mathbb Z} 1_{I_a} H 1_{I_a}= V+\Big(1-\frac 1N\Big) \Delta =H-\frac 1N\Delta.
$$
Take $\eta \in \ell^2(\mathbb Z), \Vert\eta\Vert_2=1$ such that $\langle H\eta, \eta\rangle > \sup (\Spec H)-\ve$.

It follows that
$$
\frac 1N\sum_{a\in\mathbb Z} \langle H_{I_a}\eta_{I_a}, \eta_{I_a}\rangle =\langle H\eta, \eta\rangle -\frac 2N Re \Big(\sum \eta_n \bar\eta_{n+1}\Big)
>\sup (\Spec H)-\ve-\frac 2N.
$$
Also
$$
\frac 1N\sum_{a\in\mathbb Z} \Vert\eta_{I_a}\Vert_2^2 =1.
$$
Hence, there is $a\in\mathbb Z$ such that
$$
\langle H_{I_a} \eta_{I_a}, \eta_{I_a}\rangle > \big(\sup (\Spec H)-\ve -\frac 2N\big) \Vert\eta_{I_a}\Vert^2_2
$$
and the claim in Lemma \ref{Lemma4} follows.

\begin{corollary}\label{Corollary5}
Denote
$$
\sigma_+ =\sup(\Spec H) \text { and } \sigma_- =\inf (\Spec H).
$$
Then
\be\label{3.16}
\sigma_+ \leq \sup_{|I|=N} \sup(\Spec H_I)+\frac 2N \text { and } \sigma_- \geq\inf_{|I|=N} \inf(\Spec H_I)-\frac 2N.
\ee
\end{corollary}

At first sight, the expressions on the r.h.s. of \eqref{3.14}, \eqref{3.16} seem useless for numerics because
they involve all intervals $I\subset \mathbb Z$ of size $N$.

In the situation of an ergodic Jacobi operator
\be\label{3.17}
H^{(x)} =V+\Delta \text { with } V_n=f(T^n x)
\ee
observe that if
$$
H_I^{(x)} =(H^{(x)}_{m, n})_{m, n\in I}
$$
then for $I=[a, a+N-1]$, clearly
\be\label{3.18}
H_I^{(x)} =H_{[0, N-1]}^{(T^a x)}.
\ee
Therefore, denoting $\lambda_+(x)$, resp $\lambda_- (x)$, the largest and smallest eigenvalues of $H^{(x)}_{[0, N-1]}$, we have
\be\label{3.19}
\sup_{|I|=N} \sup(\Spec H_I)=\sup_x\lambda_+ (x)
\ee
and
\be\label{3.20}
\inf_{|I|=N} \inf (\Spec H_I)=\inf_x \lambda_-(x).
\ee
Consequently, we obtain

\begin{corollary}\label{Corollary6}

Let $H^{(x)}$ be an ergodic Jacobi operator \eqref{3.17} and $\sigma_+, \sigma_-$ defined as in Corollary \ref{Corollary5}.
Let $N\geq 2$ be an integer and $\lambda_+(x)$, resp $\lambda_-(x)$ the largest and smallest eigenvalue of the $N\times N$-matrix
$H^{(x)}_{[0, N-1]}$.
Then
\be\label{3.21}
\sigma_+ \leq \max_x \lambda_+ (x)+{\frac 2N}
\ee
and
\be\label{3.22}
\sigma_-\geq \min_x \lambda_-(x) -{\frac 2N}.
\ee
\end{corollary}

In the particular case of the skew shift $\rm{Schr\ddot{o}dinger}$ operator
\be\label{3.23}
H=2\cos 2\pi n^2\omega+\Delta
\ee
the underlying dynamics is the skew shift $T$ on $\mathbb T^2$ mapping $(x, y)$ to $(x+y, y+2\omega)$. Thus we define
\be\label{3.24}
H^{(x, y)} = 2\sum_{n\in\mathbb Z} \cos 2\pi(n^2\omega+ny+x) e_n\otimes e_n+\Delta
\ee
and
\be\label{3.25}
H_N^{(x, y)} = H^{(x, y)}_{[0, N-1]} =2\sum^{N-1}_{n=0} \cos 2\pi (n^2\omega+ny+x) e_n\otimes e_n+\sum^{N-1}_{n=0}
(e_n\otimes e_{n+1} + e_{n+1} \otimes e_n).
\ee
Note the following property.
Assume $\xi =\sum_{n=0}^{N-1} \xi_n e_n $ satisfies
$$
H_N^{(x, y)} \xi=E\xi
$$
and let $\xi'= \sum_{n=0}^{N-1} (-1)^n \xi_n e_n$.
Then
$$
H_N^{(x+\frac 12, y)}\xi' =-E\xi'.
$$
Denoting
\be\label{3.26}
\lambda_+=\max_{0\leq x, y\leq 1}\lambda_+(x, y)
\ee
\be\label{3.27}
\lambda_- =\min_{0\leq x, y\leq1} \lambda_- (x, y)
\ee
with $\lambda_+(x, y)$ and $\lambda_-(x, y)$ the largest and smallest eigenvalue of $H_N^{(x, y)}$, it follows from the preceding
that $\lambda_- =-\lambda_+$. Hence

\begin{corollary}\label{Corollary7}
Let $H$ be as in \eqref{3.23}, $N\geq 2$ and $\lambda_+$ defined by \eqref{3.26}.
Then
\be\label{3.28}
\Spec H\subset \Big[-\lambda_+ - \frac 2N, \lambda_++\frac 2N\Big]
\ee
\end{corollary}
\end{proof}

\section
{Bounding spectral gaps for the skew-shift potential}

Recall the skew-shift $\rm{Schr\ddot{o}dinger}$ operators \eqref{1.4}
\be\label{4.1}
[Hu](n)=2 (\cos 2\pi n^2 \omega) u_n+u_{n+1} +u_{n-1} \text { with } \omega=\frac {1+\sqrt 5}2
\ee
and denote
$$
\sigma_+ =\sup (\Spec H) \quad \sigma_-=\inf (\Spec H)=-\sigma_+.
$$

We can make a numerical approximation of $\sigma_+$ by using Corollary \ref{Corollary6}.
Thus we choose a large $N$ and denote $\lambda_+(x, y)$ the largest eigenvalue of $H_N^{(x, y)}$ defined by \eqref{3.23}
\be
\label{4.2}
H_N^{(x, y)} =2\sum_{n=0}^{N-1} \cos 2\pi (n^2\omega+ny+x) e_n\otimes e_n+\sum^{N-1}_{n=0} (e_n\otimes e_{n+1} +e_{n+1} \otimes e_n).
\ee
According to \eqref{3.26}, one has
\be\label{4.3}
\sigma_+ \leq \max_{0\leq x, y\leq 1}\lambda_+ (x, y) +\frac 2N.
\ee

Once we established an upper and lower bound for $\Spec H$, we can obtain information about the size of possible gaps
from Corollary \ref{Corollary2}.
Thus we consider the restricted operator
\be\label{4.4}
H_N=H_N^{(0, 0)} = 2\sum_{n=0}^{N-1} (\cos 2\pi n^2\omega) e_n\otimes e_n +\sum_{n=0}^{N-1} (e_n\otimes e_{n+1}+
e_{n+1} \otimes e_n)
\ee
and denote $\lambda_1\geq \lambda_2 \geq \cdots\geq \lambda_N$ the eigenvalues of $H_N$, $\xi^{(1)}, \ldots, \xi^{(N)}$ the
corresponding normalized eigenvectors.

Our choice of $N$ here does not necessarily have to be the same as in \eqref{4.4}.
It follows from \eqref{3.5} that for any $t\in\mathbb R$
\be\label{4.5}
\text{dist\,} (t, \Spec H)\leq \min_{1\leq j\leq N} \{|t-\lambda_j|+|\xi_0^{(j)}|+|\xi^{(j)}_{N-1}|\}
\ee
with $\xi^{(j)} =\sum^{N-1}_{n=0}\xi_n^{(j)} e_n, \Vert \xi^{(j)}\Vert_2=1$.

Denote $\Gamma$ the largest gap in $\Spec H$. Thus
\be\label{4.6}
\Gamma =2 [\max_{-\sigma_+<t<\sigma_+} \text{ dist\,} (t, \Spec H)]
\ee
and by \eqref{4.5}
\be\label{4.7}
\Gamma \leq 2  \max_{-\sigma_+< t< \sigma_+} \min_{1\leq j\leq N}\{|t-\lambda_j|+|\xi_0^{(j)}|+|\xi^{(j)}_{N-1}|\}.
\ee
Instead of $H_N^{(0, 0)}$, we may as well consider $H_N^{(x, y)}$.
Hence
\be\label{4.8}
\Gamma\leq 2\max_{-\sigma_+<t<\sigma_+} \min_{x, y} \min_{1\leq j\leq N} \{|t-\lambda_j|+|\xi_0^{(x, y, j)}|+
|\xi_{N-1}^{(x, y, j)}|\}
\ee

\bigskip
\section
{Numerics}

Considering the Harper model \eqref{1.5} and the skew-shift model \eqref{1.4} with $\omega=\frac 12 (\sqrt
{5}-1)$, our purpose is to explore the finite scale behavior of the spectra and eigenvectors and compare
them for these two models.
Our numerics are based on the package MATLAB.
\medskip

\noindent
{5.1. \bf Eigenvalues and eigenvectors for the Harper model}

Set
\be\label{5.1}
H_N=2\sum^{N-1}_{n=0} (\cos 2\pi n\omega) e_n\otimes e_n+ \sum_{n=0}^{N-1} (e_n\otimes e_{n+1} + e_{n+1} \otimes e_n).
\ee
The display below shows the eigenvalue structure at $N=100, 200, 300, 400,$ and $500$.

\begin{figure}[here]
\includegraphics[scale=.50]{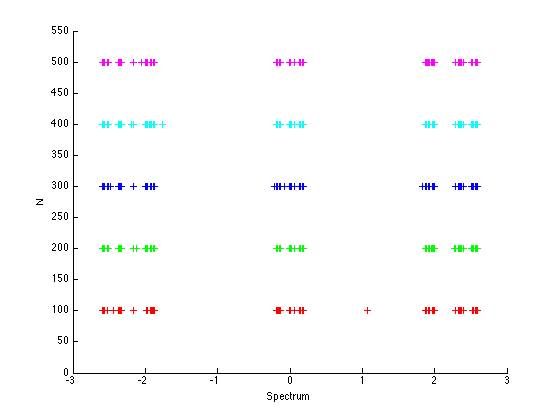}
\caption{Spectrum for the Harper Model}
\label{fig:Fig 1(a)}
\end{figure}

Note the persistency of gaps, in agreement with the (rigorously proven) Cantor structure of the spectrum of \eqref{1.1}, which
for $\lambda=1$ is in fact of zero Lebesque measure.

Next we examine the normalized eigenvectors of \eqref{5.2} for $N=200$  in different parts of the spectrum.

\pagebreak
\begin{figure}[here]
\includegraphics[scale=.50]{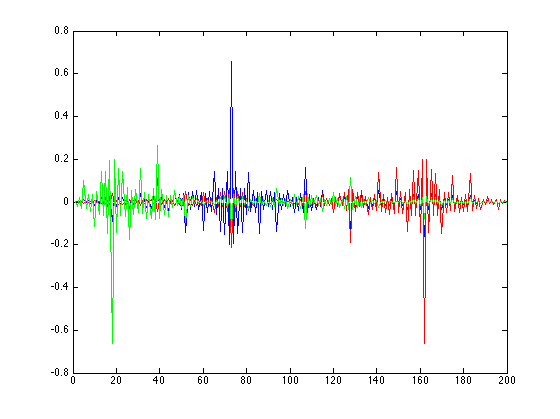}
\caption{Eigenvectors from the Left Edge of the Spectrum}
\label{fig:Fig 1(b)}
\end{figure}

\begin{figure}[here]
\includegraphics[scale=.50]{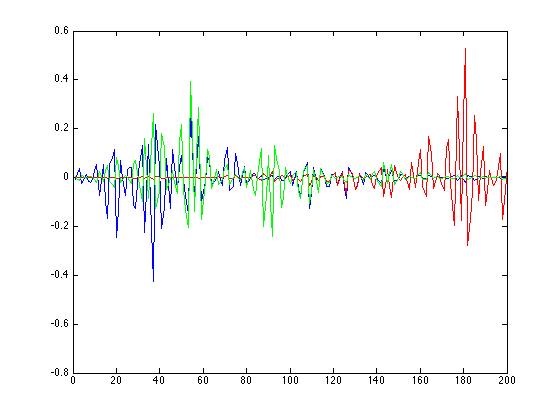}
\caption{Eigenvectors from the Center of the Spectrum}
\label{fig:Fig 1(c)}
\end{figure}

\eject
\begin{figure}[here]
\includegraphics[scale=.50]{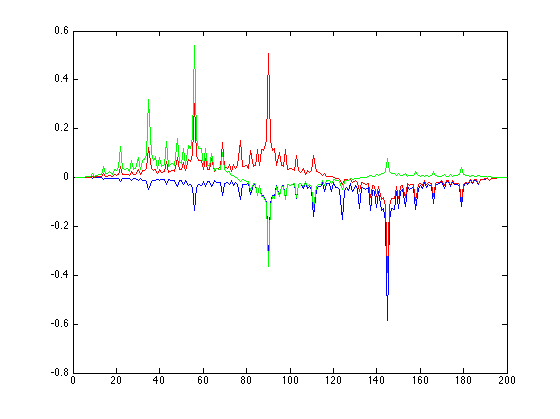}
\caption{Eigenvectors from the Right Edge of the Spectrum}
\label{fig:Fig 1(d)}
\end{figure}

\noindent
{5.2. \bf Eigenvalues and eigenvectors for the skew shift model}

Now set
\be\label{5.2}
H_N=2 \sum_{n=0}^{N-1} (\cos 2\pi n^2 \omega) e_n \otimes e_n +\sum_{n=0}^{N-1} (e_n\otimes e_{n+1}+ e_{n+1} \otimes e_n).
\ee
The eigenvalue behavior turns out to be very different as the gaps tend to close for large $N$, in agreement with the {\it conjecture} that the
spectrum of \eqref{1.4} has {\it no} gaps.

\begin{figure}[here]
\includegraphics[scale=.50]{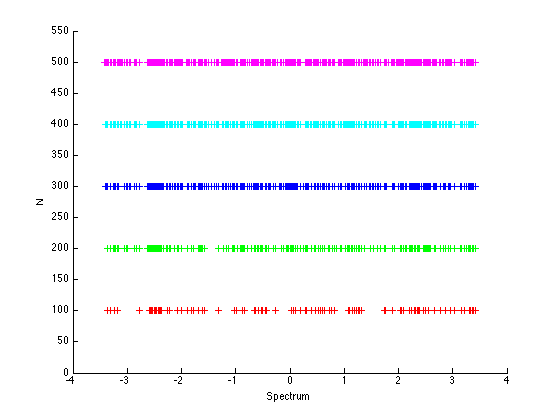}
\caption{Spectrum for Skew Shift Model}
\label{fig:Fig 2(a)}
\end{figure}
\eject

Also note that the shape of the eigenvectors is in agreement with the conjecture that \eqref {1.4} has {\it localized} states.
Below some plots for \eqref {5.2} at $N=200$ in the center and the edges of the spectrum.
\bigskip
\begin{figure}[here]
\includegraphics[scale=.50]{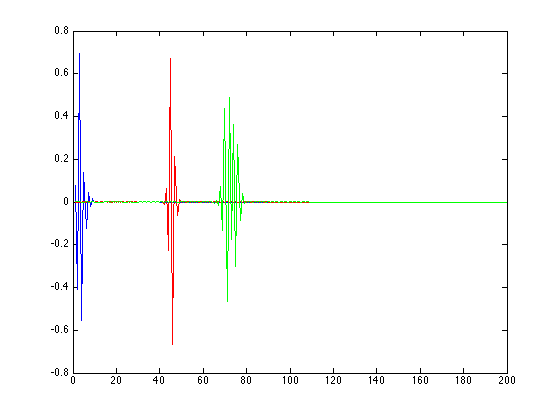}
\caption{Eigenvectors from Left Edge of Spectrum}
\label{fig:Fig 2(b)}
\end{figure}

\bigskip
\begin{figure}[here]
\includegraphics[scale=.50]{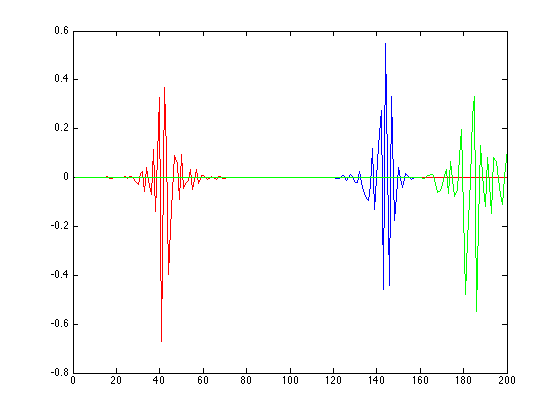}
\caption{Eigenvectors from Center of Spectrum}
\label{fig:Fig 2(c)}
\end{figure}

\begin{figure}[here]
\includegraphics[scale=.50]{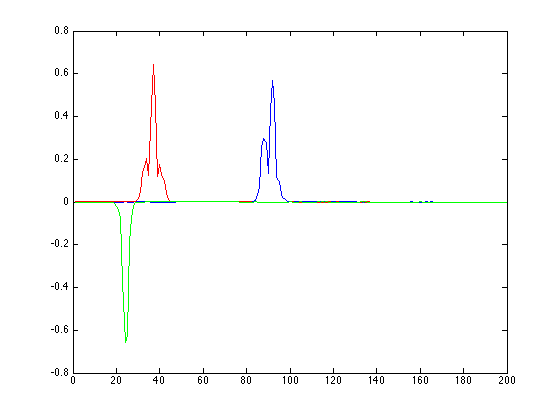}
\caption{Eigenvectors from Right Edge of Spectrum}
\label{fig:Fig 2(d)}
\end{figure}

The localized behavior is already visible at relatively low scale, as is apparent from the collective displays at $N=50, N=100$.
\bigskip
\begin{figure}[here]
\includegraphics[scale=.50]{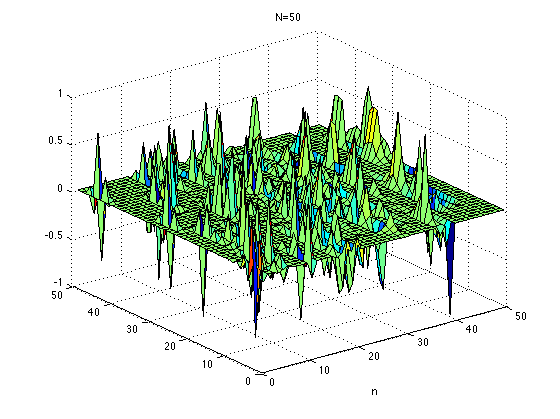}
\caption{Eigenvectors for Skew Shift Model, N=50}
\label{fig:Fig 3(a)}
\end{figure}

\begin{figure}[here]
\includegraphics[scale=.50]{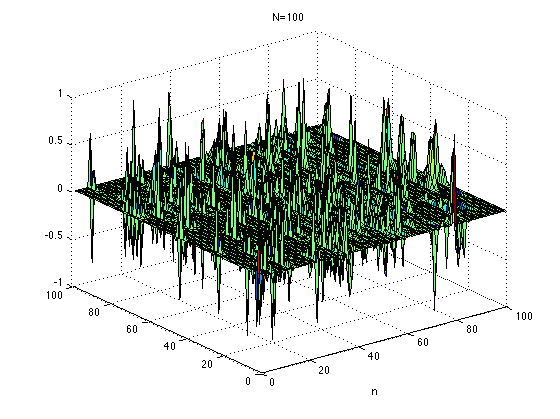}
\caption{Eigenvectors for Skew Shift Model, N=100}
\label{fig:Fig 3(b)}
\end{figure}
\bigskip
\pagebreak

\noindent
{5.3. \bf Bounding the gaps in the spectrum}

We performed some numerics pertaining to the skew shift model \eqref{4.1} as suggested by the discussion in \S4,
considering the matrices $H_N^{(x, y)}$ as defined by \eqref{4.2}.

{The first issue is a numerical evaluation of the  maximum $\sigma_+$ of the spectrum, based on \eqref{4.3}.
We obtained
\be\label{5.3}
\sigma_+ \approx 3.430
\ee

The next display graphs the function
\be\label{5.4}
\min_{x, y} \min_{1\leq j\leq N} \{|t-\lambda_j|+|\xi_0^{(x, y, j)}|+|\xi_{N-1}^{(x, y, j)}|\}
\ee
with $\{\lambda_j\}$ the eigenvalues of $H_N^{(x, y)}$ and $\{\xi^{(x, y, j)}\}$ the corresponding normalized eigenvectors.
\bigskip

\begin{figure}[here]
\includegraphics[scale=.50]{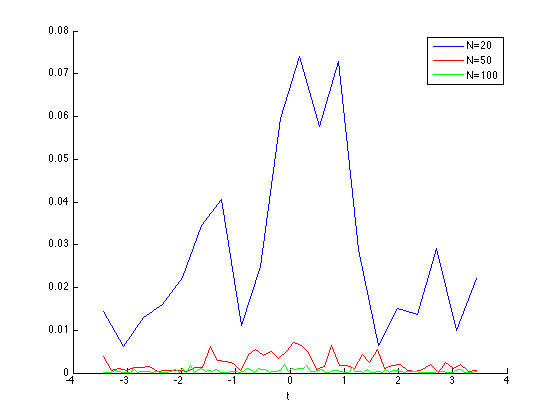}
\caption{Distance to the Spectrum for Skew Shift Model}
\label{fig:Fig 4(a)}
\end{figure}

\begin{figure}[here]
\includegraphics[scale=.50]{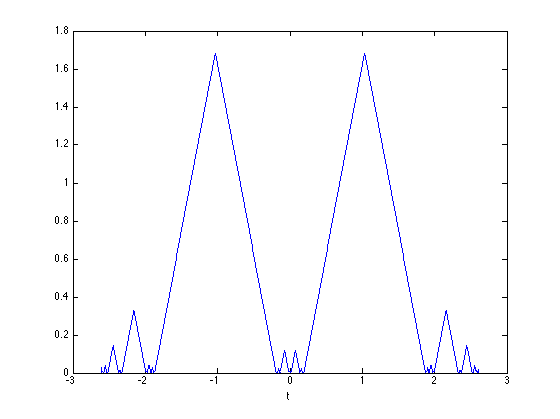}
\caption{Distance to the Spectrum for Harper Model}
\label{fig:Fig 4(b)}
\end{figure}

Based on \eqref{4.8}, an upper bound on the largest gap in the spectrum of \eqref{4.1}
\be\label{5.5}
\Gamma< 5.708 * 10^{-4}
\ee
was obtained.
Recall that $\Gamma=0$ according to the conjecture.

As a measure of comparison, we carried out these same numerics for the Harper model \eqref{1.4}.
The maximum of the spectrum now appears to be
\be\label{5.6}
\sigma_+\approx 2.5975
\ee

\noindent
The function \eqref{5.4} for the Harper model is plotted in Figure 4(b).

and a numerical estimate
\be\label{5.7}
\Gamma \approx 1.683
\ee
for the largest gap is obtained.

\bigskip
\noindent
{5.4. \bf Lyapunov exponents}

Using the notation from \S2, we computed the function
\be\label{5.8}
\frac{\log\Vert M_N(E)\Vert}{N}
\ee
where $M_N(E)$ is the transfer matrix given by \eqref{2.18} and potential
\be\label{5.9}
V_j=2\cos 2\pi j^2\omega \text { for the skew shift }
\ee
\be\label{5.10}
V_j=2\cos 2\pi j\omega \text { for the Harper model}
\ee

As was already clear from earlier calculations performed by W.~Schlag [S] around 2002, these numerics are quite unstable when $N$ becomes
reasonably large (not surprisingly so).

Figure 13 below for \eqref{5.9} at $N=20, 50, 100$ seems consistent.
\begin{figure}[here]
\includegraphics[scale=.50]{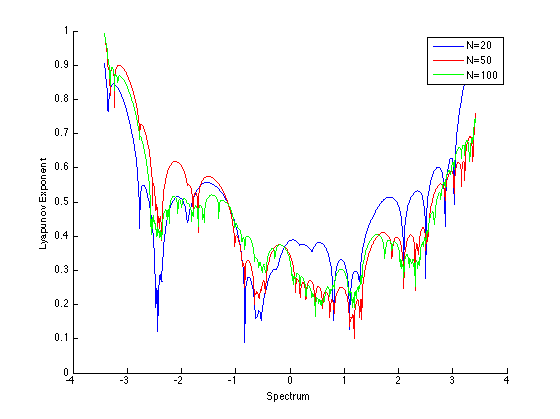}
\caption{Lyapunov Exponents for Skew Shift Model}
\label{fig:Fig 5(a)}
\end{figure}

There is also consistence with the conjecture that the Lyapunov function $L(E)$ given by \eqref{2.21} for the skew shift \eqref{1.2}
remains positive, even at small disorder $\lambda\not= 0$.
Recall that  this was only rigorously proven for $|\lambda|>1$.

Next in Figure 5(b), the corresponding display for the Harper model \eqref{5.10}, which is
in accordance with the known fact that $L(E)=0$ for $E$ in the spectrum of \eqref{1.1} when $|\lambda|\leq 1$, and the fact
that this spectrum is of zero Lebesque measure when $|\lambda|=1$.
\begin{figure}[here]
\includegraphics[scale=.50]{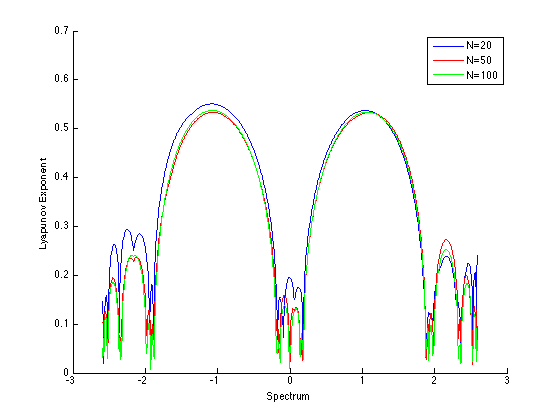}
\caption{Lyapunov Exponents for Harper Model}
\label{fig:Fig 5(b)}
\end{figure}
\bigskip
\section
{Conclusions}

The numerics carried out for the Schr\"odinger operator with skew shift potential are in convincing agreement with the
general beliefs and conjectures reviewed in the first two sections.
In particular, they indicate a spectrum with no gaps, localized eigenstates and positive Lyapunov exponents for all
energies.
This is in contrast with the Harper model where the spectrum is a Cantor set.
We have performed these numerics at different scales and obtained consistent results.
Our numerics lead moreover to a (rigorous) upperbound on the size of possible gaps (if any) for the skew shift spectrum at
the critical coupling, establishing a definitively different spectral structure compared with the Harper model.

In the case of the Harper model, our numerical findings are in accordance with the rigorously proven theoretical results, supporting the reliability of the numerical results. Few such rigorous results are available for the skew-shift counterpart, which makes the present numerical study of interest. Our main finding for the latter model is a strong indication of absence of gaps in the spectrum, as confirmed by an analysis of `truncated models' at different scales, within computational constraints.

This project offers several further research perspectives. The first is an exploration at larger scales and a finer comparison between random and pseudo-random spectra. Of particular interest is the study of the `local eigenvalue spacings' in finite models, which are known to be universal and obey Poisson statistics in the random model. One may also wish to consider $\rm{Schr\ddot{o}dinger}$ operators with frequency vectors $\omega$ other than the golden mean (as a test of consistency) and also other couplings $\lambda$. Then one could also numerically analyze $\rm{Schr\ddot{o}dinger}$ operators with different pseudo-random potentials, for instance by replacing in \eqref{1.2} the $n^2$ by $n^3$ etc. (which corresponds to higher order skew shifts) and comparing the results with the present conclusions and discussions from [B-F] and [G-F]. Finally, the problem of positivity of Lyapunov exponents remains an issue to be further studied, technically requiring a balance between scale and computational instability.

\pagebreak

\end{document}